\newtheorem{thm}{Theorem}
\newtheorem{lem}[thm]{Lemma}
\newtheorem{rmk}{Remark}
\newtheorem{assumption}{Assumption}
\newtheorem{defn}{Definition}
\newtheorem{problem}{Problem}
\tikzset{
->, 
>=latex,
node distance=3.5cm, 
every state/.style={thick, fill=blue!15}, 
initial text=$ $, 
}
\newcommand{\pushright}[1]{\ifmeasuring@#1\else\omit\hfill$\equationstyle#1$\fi\ignorespaces}
\newcommand{\pushleft}[1]{\ifmeasuring@#1\else\omit$\equationstyle#1$\hfill\fi\ignorespaces}
\providecommand{\remarkname}{\textbf{Remark}}
\providecommand{\theoremname}{\textbf{Theorem}}
\providecommand{\lemmaname}{\textbf{Lemma}}
\providecommand{\assumname}{\textbf{Assumption}}
\newcommand{\note}[1]{}
\title{\LARGE \bf{Distributionally Robust Model Predictive Control\\ with Total Variation Distance}}
\author{Anushri Dixit, Mohamadreza Ahmadi, and Joel W. Burdick \thanks{The authors are with the California Institute of Technology, 1200 E. California Blvd., MC 104-44, Pasadena, CA 91125,  e-mail: (\{adixit, mrahmadi\}@caltech.edu, jwb@robotics.caltech.edu.) }}
\begin{document}

\maketitle

\begin{abstract}
This paper studies the problem of distributionally robust model predictive control (MPC) using total variation distance ambiguity sets. For a discrete-time linear system with additive disturbances, we provide a conditional value-at-risk reformulation of the MPC optimization problem that is distributionally robust in the expected cost and chance constraints. The distributionally robust chance constraint is over-approximated as a simpler, tightened chance constraint that reduces the computational burden. Numerical experiments support our results on probabilistic guarantees and computational efficiency.
\end{abstract}

\section{Introduction}
Model {Predictive} Control (MPC) is a widely used method for robot motion planning, because it incorporates state and control constraints in a receding horizon fashion~\cite{borrelli2003constrained,fan2021step}. There are many ways to incorporate uncertainty in MPC. Robust MPC accounts for worst-case disturbances in a set of bounded uncertainties~\cite{bemporad1999robust}. This approach is often too conservative, since it does not account for the distribution of the uncertainties. Stochastic MPC (SMPC)~\cite{mesbah2016SMPC} minimizes the expected value of a cost function, while respecting a bound on the probability of violating state and control constraints (also called chance constraints). Risk-aware MPC methods use coherent risk measures~\cite{dixit2020risksensitive, Sopasakis2019} to account for variations in the underlying distribution of uncertainty.  This is convenient since one often only has an estimate of the true uncertainty distribution. This notion of allowing for variation in the underlying distribution is called distributional robustness. 

Conditional value-at-risk (CVaR) is an important example of a coherent risk measure that has received significant attention in risk-aware MPC. In~\cite{singh2018framework}, the authors proposed a Lyapunov condition for risk-sensitive exponential stability in the presence of discretely quantized process noise for a CVaR objective but did not include risk constraints in their formulation. Measurement noise and moving obstacles were considered in~\cite{hakobyan2019risk}, wherein the authors devised an MPC-based scheme for path planning with CVaR safety constraints when a reference trajectory is generated by RRT$^*$~\cite{karaman2011sampling} and extended to a Wasserstein distributionally robust formulation in~\cite{hakobyan2020wasserstein}. A method based on stochastic reachability analysis was proposed in~\cite{chapman2019risk} to estimate a CVaR-safe set of initial conditions via the solution to a Markov Decision Process. Risk-sensitive obstacle avoidance has been tackled through CVaR control barrier functions in \cite{ahmadi2020cvar} and applied to bipedal robot locomotion. {In this setting, discrete, additive process noise was considered because it could be numerically calculated in simulation for a series of walking behaviors.}

It is important to note that coherent risk measures have heretofore provided distributional robustness in the cost, but not in the chance constraints that one may need to satisfy in SMPC. This work extends the use of coherent risk measures to provide distributional robustness in the chance constraints.

Distributionally robust chance constraints (DRCCs) have been well studied in stochastic optimization. A popular metric for enforcing distributional robustness is the Wasserstein distance. In~\cite{xie2021distributionally} the author proposed a tight inner and outer approximation of the DRCC with a Wasserstein ambiguity set using a CVaR reformulation. In~\cite{mark2020DRCC}, the authors enforced DRCCs with Wasserstein distance ambiguity set in an MPC setting. Optimal control using distributionally robust CVaR constraints with second-order moment ambiguity sets can be posed as a semidefinite program in~\cite{parys2016DRCC}. 

The total variation distance (TVD) is another commonly used bounding metric on probability spaces. Intuitively, it provides an upper bound on the difference of probabilities that an event occurs under two measures~\cite{gibbs02onchoosing}, see Fig.~\ref{fig:illustration_dr} for an illustration of TVD. {In~\cite{ioannis2021tvdlqr}, the authors provide a TVD-based distributionally robust solution of the linear, quadratic regulator and use this formulation for a drop–shipping retail fulfillment application. In~\cite{zolanvari2021data}, the authors consider a data-driven strategy to solve iterative tasks using a MPC scheme. This framework is amenable to general ambiguity sets, including TVD.}

This paper extends the literature on distributionally robust MPC using TVD ambiguity sets.  We provide a deterministic approximation of a stochastic MPC optimization problem with a distributionally robust objective and DRCC with TVD ambiguity sets for a discrete distribution. This is achieved by over-approximating the DRCC in the form of a simple, but more conservative, chance constraint that is further simplified using CVaR. The objective is also reformulated as a CVaR objective. The resulting MPC optimization is an efficient quadratic program.
\begin{figure}[t]
\centerline{\includegraphics[width=80mm,scale=0.5]{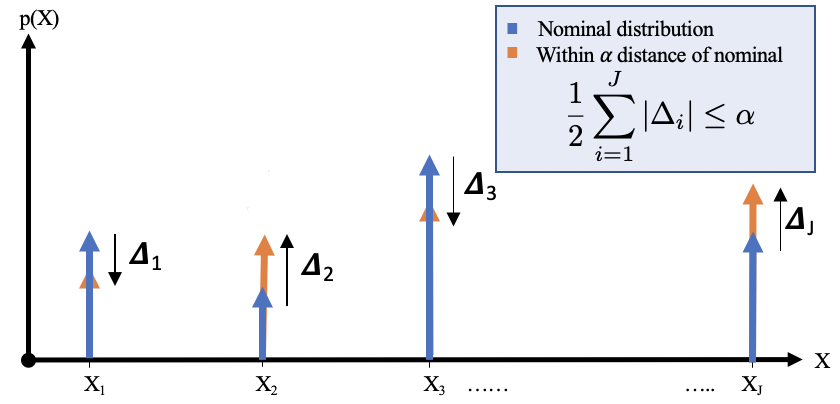}}
\vskip -0.1true in
\caption{Depiction of the Total Variation Distance (TVD) between discrete probability distributions. Our method is robust to any distribution within TVD $ \leq \alpha$ of a nominal distribution, where $\alpha\in(0,1)$.}  \label{fig:illustration_dr}
\vskip -0.36 true in
\end{figure}
\section{Preliminaries} \label{sec:preliminaries}
Consider a probability space $(\Omega, \mathcal{F}, {P})$, where $\Omega$, $\mathcal{F}$, and $P$ are the sample space, $\sigma$-algebra over $\Omega$, and probability measure over $\mathcal{F}$, respectively. A random variable $C: \Omega\xrightarrow[]{}\mathbb{R}$ denotes the cost of each outcome. For a constant $p \in [1,\infty)$, $\mathcal{L}_p(\Omega, \mathcal{F},P)$ denotes the vector space of real valued random variables $C$ for which $\mathbb{E}[|C|^p] < \infty$. The set of all random variables defined on $\Omega$ is  $\mathcal{C}$. A risk measure is a function that maps a cost random variable to a real number, $\rho:\mathcal{C}\xrightarrow[]{}\mathbb{R}$. 

For constrained stochastic optimization programs, chance constraints can be reformulated by a commonly used risk measure called the \textit{Value-at-Risk} (VaR). For a given confidence level $\alpha \in (0,1)$, $\mathrm{VaR}_{1-\alpha}$ denotes the $({1-\alpha})$-quantile value of the cost variable $C$ and is defined as, 
\begin{align*}
    \text{VaR}_{1-\alpha}(C) := \inf \{ z \,|\, \mathbb{P}(C \leq z) \geq \alpha\}.
\end{align*}
It follows that $
    \text{VaR}_{1-\alpha}(C)\leq 0 \implies \mathbb{P}(C \leq 0)\geq\alpha.$
However, VaR is generally nonconvex and hard to compute. We now introduce convex and monotonic risk measures.
\begin{defn}[Coherent Risk Measures{~\cite{artzner1999coherent, tamar2015policy}}]{Consider two random variables, {$C$, $C'\in\mathcal{C}$}. A coherent risk measure, $\rho:\mathcal{C}\xrightarrow[]{}\mathbb{R}$, is a risk measure that satisfies the following properties:
\begin{enumerate}
    \item \textbf{Monotonicity} $C \leq C' \implies \rho(C) \leq \rho(C')$,
    \item \textbf{Translational invariance} $\rho(C + a) = \rho(C) + a, \, \forall a \in \mathbb{R}$,
    \item \textbf{Positive homogeneity} $\rho(a C) = a\rho(C), \, \forall a \geq 0$, 
    \item \textbf{Subadditivity} $\rho (C + C') \leq \rho(C) + \rho(C')$.
\end{enumerate}}
\end{defn}

Coherent risk measures can be written as the worst-case expectation over a convex and closed set of probability mass (or density) functions (pdfs/pmfs). This convex set in the dual representation of a risk measure is an \textit{ambiguity set}.

\begin{defn}[Representation Theorem]{Every coherent risk measure can be represented in its dual form as, 
\begin{align*}
    \rho(C) := \sup_{Q \in \mathcal{Q}}E_Q(C),
\end{align*}

where the ambiguity set (or risk envelope) is convex and closed,  {and probability density $Q(X)$ is absolutely continuous with respect to density $P(X)$; \textit{i.e.,} $P(X)=0\ \rightarrow \ Q(X)=0$.}
}
\end{defn}

While coherent risk measures act on scalar cost random variables, in this paper, we write $\rho(\boldsymbol{C})$, where $\boldsymbol{C}$ is an $n-$vector of cost random variables, to mean $\rho(\boldsymbol{C}) = \begin{bmatrix}\rho(C_1), \dotsc, \rho(C_n) \end{bmatrix}^T$. We present two examples of coherent risk measures pertinent to the problem presented in this work.

\subsection{Conditional Value-at-Risk}

The {\em conditional value-at-risk}, $\mathrm{CVaR}_{1-\alpha}$, measures the expected loss in the $({1-\alpha})$-tail given that the threshold $\mathrm{VaR}_{1-\alpha}$ has been crossed. $\mathrm{CVaR}_{1-\alpha}$ is found as{~\cite{rockafellar2000optimization}
\begin{equation}
\begin{aligned}
    \mathrm{CVaR}_{1-\alpha}(C):=&\inf_{z \in \mathbb{R}}\mathbb{E}\Bigg[z + \frac{(C-z)^{+}}{1-\alpha}\Bigg], \\ =& \inf_{z \in \mathbb{R}, s \in \mathbb{R}^J} \sum_{j=1}^{J}p(j) \bigg(z +\frac{s(j)}{1-\alpha} \bigg) \\
    & \quad\text{s.t.} \quad s(j) \geq 0, \, s(j) + z \geq C^j,
\end{aligned}
\end{equation} 
}
where $(\cdot)^{+}:=\max\{\cdot, 0\}$ and $C^j$ denotes the $j^{\text{th}}$ realization of the random variable $C$ whose probability of occurrence is given by $p(j)$. A value of $\alpha \simeq 0$ corresponds to a risk-neutral case. A value of $\alpha \to 1$ is rather a risk-averse case.  
CVaR provides a convex upper bound of VaR, i.e.,
\begin{align}
    \small{\text{VaR}_{1-\alpha}(C) \leq \text{CVaR}_{1-\alpha}(C) \leq 0 \implies \mathbb{P}(C\leq 0)\geq \alpha.}
\end{align}
\vskip -0.1 true in
\subsection{Total Variation Distance}
This work aims to achieve distributional robustness for ambiguity sets defined by the total variation distance~\cite{shapiro2017distributionally},
\begin{equation*}
  \displaystyle \text{TVD}_{1-\alpha}(C)\! := \! \sup_{Q \in \mathcal{Q}_\text{TVD}}\!\!\!E_Q(C) = \alpha\sup_{c\in\Omega}c + (1-\alpha) \displaystyle \text{CVaR}_{1-\alpha}(C),
\end{equation*}
where the ambiguity set of $\text{TVD}_{1-\alpha}(C)$ is given by,
\begin{equation*}
    \mathcal{Q}_{\text{TVD}} := \Big\{q \in \Delta_J: d_{TV}(p,q) = \frac{1}{2}\sum_{j=1}^{J}|q(j)-p(j)|\leq \alpha\Big\},
\end{equation*}
where $\Delta_J$ is the probability simplex, $ \Delta_J:= \{q \in \mathbb{R}^J\,|\, q \geq 0, \,\sum_{j=1}^{J}q(j) = 1\}$ and {$p$ refers to the probability mass function associated with the random variable $C$}. Gibbs and Su~\cite{gibbs02onchoosing} survey the relationships between total variation and various other probability metrics/distances. 

\section{Problem formulation}

We consider a class of discrete-time systems given by
\begin{equation}\label{eq:sys}
\begin{aligned}
    \boldsymbol{x}(t+1) &= A\boldsymbol{x}(t) + B\boldsymbol{u}(t) + D\boldsymbol{\delta}(t)
\end{aligned}
\end{equation}
where $\boldsymbol{x}(t) \in \mathbb{R}^{n_x}$ and $\boldsymbol{u}(t) \in \mathbb{R}^{n_u}$ are the system state and controls at time $t$, respectively. The system is affected by a stochastic, additive, process noise $\boldsymbol{{\delta}}_t \in \mathbb{R}^{n_d}$.
\begin{assumption}[Discrete process noise]\label{assumption: process_noise}
\textit{The process noise $\boldsymbol{\delta}$ consists of i.i.d. samples of a discrete distribution given by the probability mass function (pmf), $\boldsymbol{p} = [p(1), p(2), \dotsc, p(J)]^T$}. For this distribution, we also define the index set $\mathcal{D} = \{1, \dotsc, J\}$.\footnote{The results in this paper can be extended to continuous distributions using sample average approximation~\cite{kleywegt2002sample} or other sample-based techniques~\cite{blackmore2010particleSMPC}.}
\end{assumption}
Consider there are {$r$ state constraints} that take the form $$\mathcal{X} := \{\boldsymbol{x} \in \mathbb{R}^{n_x} | F_x\boldsymbol{x} \leq g_x\}, F_x\in \mathbb{R}^{r \times n_x}, g_x \in \mathbb{R}^r.$$ In this work, we are interested in satisfying these state constraints in a distributionally robust manner,
\begin{align*}
   1-\epsilon &\leq \min_{Q \in \mathcal{Q}}\mathbb{P}_Q(\boldsymbol{x} \in \mathcal{X})  \\
     &=1 - \max_{Q \in \mathcal{Q}}\mathbb{P}_Q(\boldsymbol{x} \not\in \mathcal{X}) \\
     &= 1 - \max_{Q \in \mathcal{Q}}\mathbb{E}_Q(\mathbbm{1}_{\boldsymbol{x} \not\in \mathcal{X}}) \\
     &= 1 - \rho(\mathbbm{1}_{\boldsymbol{x} \not\in \mathcal{X}}),
\end{align*}
where we used the property that the probability of an event can be expressed as the expected value of its indicator to express the DRCC as the risk of an indicator. Hence, 
\begin{align}\label{eq:state_const}
    \min_{Q\in\mathcal{Q}_{\text{TVD}}} \mathbb{P}_Q(\boldsymbol{x} \in \mathcal{X})\geq 1-\epsilon \iff \text{TVD}_{1-\alpha}(\mathbbm{1}_{\boldsymbol{x} \not\in \mathcal{X}}) &\leq \epsilon.
\end{align}

 \begin{problem}
\textit{
Consider the discrete-time system~\eqref{eq:sys}. Given a deterministic initial condition $\boldsymbol{x}_0 \in \mathbb{R}^{n_x}$, state constraints $\mathcal{X}\subset \mathbb{R}^{n_x}$, convex polytopic control constraints $\mathcal{U}\subset \mathbb{R}^{n_u}$, a convex stage cost $c:\mathbb{R}^{n_x} \times \mathbb{R}^{n_u}\to \mathbb{R}_{\ge0}$, a horizon $N \in \mathbb{N}$, and risk tolerance $\epsilon\in(0,1)$\ for state constraints, compute the receding horizon controller $\boldsymbol{u} = \{\boldsymbol{u}_k \}_{k=0}^{N-1}$ such that the total cost $\mathcal{J}({x}(t), \boldsymbol{u})$ is minimized while satisfying the distributionally robust constraints~\eqref{eq:state_const}, i.e., the solution to the following optimization problem,}
\begin{subequations}\label{eq:mpc1}
\begin{align}
\begin{split}\label{eq:cost}
\min_{\boldsymbol{u}} \quad &\mathcal{J}(x(t), \boldsymbol{u}) := \text{TVD}_{1-\alpha}\bigg(\sum_{k=0}^{N-1}c(\boldsymbol{x}_k, \boldsymbol{u}_k)\bigg) \quad
\end{split}\\
\begin{split}\label{eq:dyn1}
 \textrm{s.t.} \quad &\boldsymbol{x}_{k+1} = A\boldsymbol{x}_k + B\boldsymbol{u}_k + D{{\boldsymbol{\delta}}_k},
\end{split}\\
\begin{split}\label{eq:stcon}
&\text{TVD}_{1-\alpha}(\mathbbm{1}_{\boldsymbol{x}_k \not\in \mathcal{X}}) \leq \epsilon, 
\end{split}\\
\begin{split}\label{eq:ic}
&\boldsymbol{u}_k \in \mathcal{U}, \,\quad\boldsymbol{x}_0 = \boldsymbol{x}(t),\, \forall k \in \{0, \dotsc N-1 \},
\end{split}
\end{align}
\end{subequations}
\textit{where, $\boldsymbol{x}_k = \boldsymbol{x}(t+k|t)$ and $\boldsymbol{u}_k = \boldsymbol{u}(t+k|t)$.}
\end{problem}

\section{MPC reformulation}
The batch form of Equation~\eqref{eq:dyn1} can be re-written as
\begin{align*}
    \boldsymbol{x}_{k+1} &= A^{k+1}\boldsymbol{x}_0 + \sum_{i = 0}^{k}\Big( A^iB\boldsymbol{u}_{k+1-i}+ A^iD\boldsymbol{\delta}_{k+1-i}\Big),\\
    &=A^{k+1}\boldsymbol{x}_0 + \boldsymbol{B}_{k+1}\boldsymbol{\bar{u}}_{k+1} + \boldsymbol{D}_{k+1}\boldsymbol{\bar{\delta}}_{k+1},
\end{align*}
where, $\boldsymbol{B}_k, \,\boldsymbol{\bar{u}}_k,\, \boldsymbol{D}_k, \text{ and }\boldsymbol{\bar{\delta}}_k$ are described in Appendix~\ref{appendix: batch}.
Clearly, as $k$ increases, the disturbance effects compound. At each step $k$, the distribution of $\boldsymbol{x}_k$ is given by the joint distribution of $(\boldsymbol{\delta_1},\, \dotsc, \, \boldsymbol{\delta_k})$. Let $\boldsymbol{p}_k$ denote the probability mass function of this joint distribution. The size of this joint pmf is $J^k$ (see Assumption 1). 

The following key result provides an over approximation of the DRCC for TVD ambiguity sets.
\begin{lem}[Risk Reformulation]\label{lem:risk_constraint}
If Assumption~\ref{assumption: process_noise} holds, then~\eqref{eq:stcon} is satisfied if the following constraint is satisfied,
\begin{equation}\label{eq:strict_stcon}
\begin{aligned}
    {\mathbb{P}(\boldsymbol{x}_k \not\in \mathcal{X})+ \alpha\leq \epsilon.}
\end{aligned}
\end{equation}
\end{lem}
\begin{proof}
We express the risk constraint through its dual representation. Note that we are finding the worst-case expectation within the TVD-based risk envelope. The distribution that gives us this worst-case expectation has the pmf $\boldsymbol{q}\circ\boldsymbol{p}_k$ (where $\circ$ denotes element-wise multiplication) in the following optimization,
\begin{subequations}
\begin{align}
\begin{split}\label{eq:opt1}
    \text{TVD}_{1-\alpha}(\mathbbm{1}_{\boldsymbol{x}_k \not\in \mathcal{X}}) &= \max_{\boldsymbol{q}}  \sum_{j = 1}^{J^k} q(j)p_k(j) \mathbbm{1}_{\boldsymbol{x}^j_k \not\in \mathcal{X}}  \\
     &\,\,\, \text{s.t.} \,\,\,  \sum_{j=1}^{J^k} q(j)p_k(j) = 1, \, q(j) \geq 0, \\
     &\qquad  \sum_{j=1}^{J^k} \frac{1}{2} |q(j)p_k(j) - p_k(j)| \leq \alpha.
\end{split} \\
\begin{split}\label{eq:opt2}
    &= \min_{\boldsymbol{\lambda_1} \in \mathbb{R}^{J^k}, \lambda_2, \nu \in \mathbb{R}}\max_{\boldsymbol{q}}  \quad \mathcal{L}(\boldsymbol{\lambda}_1, \lambda_2, \nu) \\
    & \qquad \, \text{s.t.} \qquad \, \boldsymbol{\lambda}_1 \succeq 0, \, \lambda_2 \geq 0.
\end{split}
\end{align}
\end{subequations}
 where, $\mathcal{L}(\boldsymbol{\lambda}_1, \lambda_2, \nu)$ is the Lagrangian of the constrained optimization given in~\eqref{eq:opt1} given by,
\begin{align*}
    \mathcal{L}(\boldsymbol{\lambda}_1, \lambda_2, \nu) =& \sum_{j = 1}^{J^k} q(j) \bigg[ p_k(j) \mathbbm{1}_{\boldsymbol{x}^j_k \not\in \mathcal{X}} + \nu p_k(j) + \lambda_1(j) \bigg]  \\& -\lambda_2 \big(\sum_{j=1}^{J^k}p_k(j) \underbracket{|q(j) - 1|}_{f(q(j))} - 2\alpha\big) - \nu.
\end{align*}

The inner maximization of~\eqref{eq:opt2} can be solved by using convex conjugate of the function $f(x) = |x - 1|$ given by, 
$$
f^*(y) = \left\{ \begin{matrix} y && |y| \leq 1 \\ +\infty && |y|>1\end{matrix} \right. .
$$ Hence, we obtain $
    \max_{\boldsymbol{q}}\, \mathcal{L}(\boldsymbol{\lambda}_1, \lambda_2, \nu) = \lambda_2\sum_{j=1}^{J^k} p_k(j)f^*\big((\lambda_2p_k(j))^{-1} \small{\big(p_k(j) \mathbbm{1}_{\boldsymbol{x}^j_k \not\in \mathcal{X}} + \nu p_k(j) + \lambda_1(j)\big)\big)}  + 2\lambda_2\alpha - \nu$.

Now we substitute the above convex conjugate $f^*$ into~\eqref{eq:opt2}\footnote{We can take the inverse of $\lambda_2$ in the conjugate because the solution of~\eqref{eq:opt1} always lies on the boundary of the TVD constraint $\sum_{j=1}^{J} \frac{1}{2} |q(j)p_k(j) - p_k(j)| \leq \alpha$, i.e., the optimal $\lambda_2 > 0$.},
\begin{subequations}
\begin{align}
\begin{split}\label{eq:opt3}
     &\text{TVD}_{1-\alpha}(\mathbbm{1}_{\boldsymbol{x}_k \not\in \mathcal{X}})\\  &= \min_{\boldsymbol{\lambda_1}, \lambda_2, \nu} \, \sum_{j=1}^{J^k}\big(p_k(j) \mathbbm{1}_{\boldsymbol{x}^j_k \not\in \mathcal{X}} + \nu p_k(j) + \lambda_1(j)\big) + 2\lambda_2\alpha-\nu \\
     & \qquad \, \text{s.t.} \quad \,\boldsymbol{\lambda}_1 \succeq 0, \, \lambda_2 \geq 0, \\
     & \qquad -1 \leq (\lambda_2p_k(j))^{-1}\big(p_k(j) \mathbbm{1}_{\boldsymbol{x}^j_k \not\in \mathcal{X}} + \nu p_k(j) + \lambda_1(j)\big) \leq 1
\end{split}\\
\begin{split}\label{eq:opt4}
    &= \min_{\boldsymbol{\lambda_1}, \lambda_2, \nu} \, \mathbb{P}(\boldsymbol{x}_k \not\in \mathcal{X}) + \nu+  \sum_{j=1}^{J^k}\lambda_1(j) + 2\lambda_2\alpha -\nu\\
     & \qquad \, \text{s.t.} \quad \,\boldsymbol{\lambda}_1 \succeq 0, \, \lambda_2 \geq 0, \\
     & \qquad \qquad -\lambda_2p_k(j) - \nu p_k(j) - \lambda_1(j) \leq p_k(j) \mathbbm{1}_{\boldsymbol{x}^j_k \not\in \mathcal{X}}  , \\
     & \qquad \qquad \,\,\, p_k(j) \mathbbm{1}_{\boldsymbol{x}^j_k \not\in \mathcal{X}} \leq \lambda_2p_k(j)- \nu p_k(j) - \lambda_1(j) ,
\end{split} \\
\begin{split}\label{eq:opt5}
    &\leq \min_{\boldsymbol{\lambda_1}, \lambda_2, \nu} \,  \mathbb{P}(\boldsymbol{x}_k \not\in \mathcal{X})+  \sum_{j=1}^{J^k}\lambda_1(j) + 2\lambda_2\alpha\\
     & \qquad \, \text{s.t.} \quad \,\boldsymbol{\lambda}_1 \succeq 0, \, \lambda_2 \geq 0, \\
     & \qquad \qquad -\lambda_2p_k(j) - \nu p_k(j) - \lambda_1(j)\leq 0, \\
     & \qquad \qquad \,\,\,  p_k(j) \leq  \lambda_2p_k(j)- \nu p_k(j) - \lambda_1(j) ,
\end{split} \\
\begin{split}
&=  \mathbb{P}(\boldsymbol{x}_k \not\in \mathcal{X})+ {\alpha.}
\end{split}
\end{align}
\end{subequations}
In the above equations, we first substituted $f^*$ and constrained the argument of the conjugate to lie in [$-1, 1$] in~\eqref{eq:opt3} considering that the conjugate is unbounded outside this range. Afterwards, we noted that $$\mathbb{P}(\boldsymbol{x}_k \not\in \mathcal{X}) = \sum_{j=1}^{J^k}p_k(j) \mathbbm{1}_{\boldsymbol{x}^j_k \not\in \mathcal{X}},$$ and re-arranged the inequality constraints on $(\lambda_2p_k(j))^{-1}\big(p_k(j) \mathbbm{1}_{\boldsymbol{x}^j_k \not\in \mathcal{X}} + \nu p_k(j) + \lambda_1(j)\big)$ to obtain~\eqref{eq:opt4}. 

Next, we make the following constraints stricter in~\eqref{eq:opt5}
\begin{align*}
    -\lambda_2p_k(j) - \nu p_k(j) - \lambda_1(j)\leq 0\implies& \\ -\lambda_2p_k(j) - \nu p_k(j) -& \lambda_1(j) \leq p_k(j) \mathbbm{1}_{\boldsymbol{x}^j_k \not\in \mathcal{X}},\\
    p_k(j) \leq  \lambda_2p_k(j)- \nu p_k(j) - \lambda_1(j) \implies&\\  p_k(j) \mathbbm{1}_{\boldsymbol{x}^j_k \not\in \mathcal{X}} 
    \leq
    \lambda_2p_k(j)-& \nu p_k(j) - \lambda_1(j),
\end{align*}
to get a upper bound on~\eqref{eq:opt4} that is independent of the state $ {x}^j_k$. {Finally, we note that $(\boldsymbol{\lambda}_1, \lambda_2, \nu) = (\boldsymbol{0}, 0.5, -0.5)$ satisfies the KKT condition~\cite{boyd2004convex} and hence is the optimal solution to ~\eqref{eq:opt5} to complete the proof.}
\end{proof}
\begin{rmk}
{A natural interpretation of the tightening provided in Lemma~\ref{lem:risk_constraint} can be seen in Fig.~\ref{fig:illustration_dr}: one could have predicted that the probability of constraint violation can vary at most by $2\alpha$. The above proof provides a state-independent way to realize a similar tightening. It may be more conservative than the TVD constraint value, but it simply approximates the distributionally robust chance constraint by a more conservative chance constraint (with lower violation probability).} If $\alpha > \epsilon$, then the chance constraint becomes $\mathbb{P}(\boldsymbol{x}_k \not\in \mathcal{X}) < 0$, which is impossible to satisfy and indicates that the conservativeness of the chance constraint should be reduced by decreasing $\alpha$ or increasing $\epsilon$. 
\end{rmk}
\begin{lem}\label{lem:risk_tighten}
If Assumption~\ref{assumption: process_noise} holds, the TVD constraint~\eqref{eq:stcon} is satisfied if the following constraint is satisfied,
\begin{equation}\label{eq:cvar_stcon} 
 F_x\boldsymbol{\Tilde{x}}_k + \text{CVaR}_{\epsilon-\alpha}\big(F_x\boldsymbol{D}_{k}\boldsymbol{\bar{\delta}}_{k}\big) \leq g_x,
\end{equation}
where $\boldsymbol{\Tilde{x}}_k$ is the undisturbed nominal state: $\boldsymbol{x}_k = \boldsymbol{\Tilde{x}}_k + D\boldsymbol{\bar{\delta}}_k$.
\end{lem}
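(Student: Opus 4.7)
The plan is to reduce the distributionally robust constraint layer-by-layer until we reach a deterministic, convex condition on $\boldsymbol{\Tilde{x}}_k$. The starting point is Lemma~\ref{lem:risk_constraint}, which already collapses~\eqref{eq:stcon} into the nominal chance constraint $\mathbb{P}(\boldsymbol{x}_k\notin\mathcal{X})\leq\epsilon-\zeta^*_k$, or equivalently $\mathbb{P}(F_x\boldsymbol{x}_k\leq g_x)\geq 1-(\epsilon-\zeta^*_k)$. So it suffices to show that~\eqref{eq:cvar_stcon} implies this tightened probability bound.

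The next step is to split the state into its nominal and stochastic parts using the batch-form decomposition $\boldsymbol{x}_k=\boldsymbol{\Tilde{x}}_k+\boldsymbol{D}_k\boldsymbol{\bar{\delta}}_k$. Since $\boldsymbol{\Tilde{x}}_k$ depends only on $\boldsymbol{x}_0$ and on the (deterministic) control sequence, it is a constant with respect to the process noise, while all randomness is carried by $F_x\boldsymbol{D}_k\boldsymbol{\bar{\delta}}_k$. Translational invariance of the coherent risk measure CVaR then yields
$\text{CVaR}_{\epsilon-\zeta^*_k}(F_x\boldsymbol{x}_k-g_x)=(F_x\boldsymbol{\Tilde{x}}_k-g_x)+\text{CVaR}_{\epsilon-\zeta^*_k}(F_x\boldsymbol{D}_k\boldsymbol{\bar{\delta}}_k)$,
so~\eqref{eq:cvar_stcon} is literally the statement $\text{CVaR}_{\epsilon-\zeta^*_k}(F_x\boldsymbol{x}_k-g_x)\leq 0$.

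To close the loop, I would invoke the VaR/CVaR inequality from the preliminaries — namely that $\text{CVaR}_{1-\alpha}(C)\leq 0$ dominates $\text{VaR}_{1-\alpha}(C)$ and hence implies $\mathbb{P}(C\leq 0)\geq\alpha$. Applied with the identification $\alpha=1-(\epsilon-\zeta^*_k)$, this converts the CVaR inequality into the chance-constraint inequality $\mathbb{P}(F_x\boldsymbol{x}_k\leq g_x)\geq 1-(\epsilon-\zeta^*_k)$, and Lemma~\ref{lem:risk_constraint} then delivers~\eqref{eq:stcon}.

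The main point to be careful about is the vectorial nature of the constraint: $F_x\boldsymbol{x}_k-g_x\in\mathbb{R}^r$, so under the paper's componentwise convention both $\text{CVaR}_{\epsilon-\zeta^*_k}(\cdot)$ and the inequality ``$\leq 0$'' in~\eqref{eq:cvar_stcon} act row-by-row. Strictly speaking, row-wise CVaR bounds yield row-wise chance bounds, so passing to the joint event $\{F_x\boldsymbol{x}_k\leq g_x\}$ requires a Bonferroni-type union bound that re-allocates the $\epsilon-\zeta^*_k$ budget across the $r$ rows. I expect this bookkeeping to be the main technical nuisance; the rest of the argument is a direct chain of translational invariance of CVaR and the standard CVaR-dominates-VaR inequality already recalled in Section~II.
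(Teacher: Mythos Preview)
Your proposal is correct and mirrors the paper's own proof almost exactly: the paper also invokes Lemma~\ref{lem:risk_constraint} to reduce~\eqref{eq:stcon} to $\mathbb{P}(F_x\boldsymbol{x}_k-g_x>0)\leq\epsilon-\zeta^*_k$, then uses the CVaR conservative approximation of the chance constraint (citing Nemirovski--Shapiro rather than the $\mathrm{VaR}\leq\mathrm{CVaR}$ inequality you cite from Section~II, but these are the same fact) together with translational invariance after substituting $\boldsymbol{x}_k=\boldsymbol{\Tilde{x}}_k+\boldsymbol{D}_k\boldsymbol{\bar{\delta}}_k$. The Bonferroni bookkeeping you flag for the $r$-row case is not addressed in the paper's proof either---it implicitly relies on the componentwise convention for $\rho(\boldsymbol{C})$ stated after Definition~2---so your argument is at least as rigorous as the original.
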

\begin{proof}
{In~\cite{nemirovski2007convex}, Nemirovski and Shapiro showed that CVaR provides a convex conservative approximation of the chance constraint. We have shown in Lemma~\ref{lem:risk_constraint} that~\eqref{eq:stcon} is satisfied if~\eqref{eq:strict_stcon} holds. Hence, a conservative approximation of $\mathbb{P}(\boldsymbol{x}_k \not\in \mathcal{X}) = \mathbb{P}(F_x\boldsymbol{x}_k - g_x > 0)\leq \epsilon - \alpha$ is given by,
\begin{align*}
    0 &\geq \text{CVaR}_{\epsilon-\alpha}(F_x\boldsymbol{x}_k - g_x) \\
        &= \text{CVaR}_{\epsilon-\alpha}(F_x\boldsymbol{\Tilde{x}}_k + F_x\boldsymbol{D}_k\boldsymbol{\bar{\delta}}_k- g_x)) \\
        &= F_x\boldsymbol{\Tilde{x}}_k - g_x + \text{CVaR}_{\epsilon-\alpha}( F_x\boldsymbol{D}_k\boldsymbol{\bar{\delta}}_k),
\end{align*}
where we obtain the first step by plugging in $\boldsymbol{x}_k = \boldsymbol{\Tilde{x}}_k + D\boldsymbol{\bar{\delta}}_k$ and the next step follows from the translational invariance property of coherent risk measures.}
\end{proof}
Lemmas \ref{lem:risk_constraint},~\ref{lem:risk_tighten} provide a simple tightening of the state constraints~\eqref{eq:stcon}. 

\begin{rmk}
CVaR can be computed using the minimization given in Section \ref{sec:preliminaries}. The chance constraint can be further tightened using the positive homogeneity property of coherent risk measures and i.i.d assumption on all disturbances,
\begin{equation}\label{eq:tighten_stcon}
    F_x\boldsymbol{\Tilde{x}}_k + \lVert F_xD\rVert_1\text{CVaR}_{\epsilon-\alpha}(|\delta|) \leq g_x.
\end{equation}
 {This tightening further reduces the size of the optimization problem, as $ \lVert F_xD\rVert_1\text{CVaR}_{\epsilon-\alpha}(|\delta|)$ can be expressed with approximately $ J^k$ fewer optimization variables and $2J^k$ fewer constraints for each $k\in\{0,\dotsc N-1\}$when computed online.}
\end{rmk}
\begin{lem}[Cost function]\label{lem:cost}
If the cost function given in~\eqref{eq:cost} is expressed as a quadratic cost with 
\begin{equation*}
    c(\boldsymbol{x}_k, \boldsymbol{u}_k) = \boldsymbol{x}_k^TQ\boldsymbol{x}_k + \boldsymbol{u}_k^TR\boldsymbol{u}_k,
\end{equation*}
then the MPC cost $\mathcal{J}(\boldsymbol{x}_t, \boldsymbol{u})$ is equivalently expressed as:
\begin{subequations}\label{eq:cost_new_val}
\begin{align}
\begin{split}
    &\small{\mathcal{J}(\boldsymbol{x}_t, \boldsymbol{u})} \\&= \small{\min_{m, z, \boldsymbol{s}}\!\!\sum_{k=0}^{N-1}\!\!\!c(\boldsymbol{\Tilde{x}}_k, \boldsymbol{u}_k)\! +\! \alpha m\! +\! (1-\alpha) \sum_{j=1}^{J^N}p_N(j) \big(z\!+\!\frac{s(j)}{1-\alpha} \big)}
    \end{split}\\
    \begin{split}\label{eq:cost_con1}
    & \small{\quad \textrm{s.t.} \quad m\geq\!\!\! \sum_{k=0}^{N-1}\!\!\!\big(\boldsymbol{D}_{k}\boldsymbol{\bar{\delta}}_{k}^j + 2A^{k}\boldsymbol{x}_0 +  2\boldsymbol{B}_{k}\boldsymbol{\bar{u}}_{k}\big)^TQ\boldsymbol{D}_{k}\boldsymbol{\bar{\delta}}_{k}^j,}
    \end{split}\\
    \begin{split}\label{eq:cost_con2}
    &\small{\qquad \quad\, s(j) + z \geq\!\!\!\sum_{k=0}^{N-1}\big(\boldsymbol{D}_{k}\boldsymbol{\bar{\delta}}_{k}^j + 2A^{k}\boldsymbol{x}_0 +  2\boldsymbol{B}_{k}\boldsymbol{\bar{u}}_{k}\big)^TQ\boldsymbol{D}_{k}\boldsymbol{\bar{\delta}}_{k}^j,}
    \end{split}\\
    \begin{split}\label{eq:cost_con3}
    &\qquad \quad s(j) \geq 0, \quad \forall j \in \{1, \dotsc, J^N \} 
    \end{split}
\end{align}
\end{subequations}
\end{lem}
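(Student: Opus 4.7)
The plan is to decompose the cost into a deterministic part and a disturbance-dependent part, exploit the fact that $\text{TVD}_{1-\alpha}$ is a coherent risk measure, and then linearize the resulting sup and CVaR with standard epigraph variables.

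First, I would substitute $\boldsymbol{x}_k = \boldsymbol{\tilde{x}}_k + \boldsymbol{D}_k\boldsymbol{\bar{\delta}}_k$ into the stage cost and expand the quadratic:
\begin{equation*}
\boldsymbol{x}_k^T Q \boldsymbol{x}_k = \boldsymbol{\tilde{x}}_k^T Q \boldsymbol{\tilde{x}}_k + \bigl(\boldsymbol{D}_k\boldsymbol{\bar{\delta}}_k + 2\boldsymbol{\tilde{x}}_k\bigr)^T Q \,\boldsymbol{D}_k\boldsymbol{\bar{\delta}}_k.
\end{equation*}
Summing over $k$ and recalling $\boldsymbol{\tilde{x}}_k = A^k\boldsymbol{x}_0 + \boldsymbol{B}_k\boldsymbol{\bar{u}}_k$, the cost splits into the deterministic quantity $\sum_{k=0}^{N-1} c(\boldsymbol{\tilde{x}}_k,\boldsymbol{u}_k)$ plus a $\boldsymbol{\bar{\delta}}$-dependent term that exactly matches the right-hand sides appearing in \eqref{eq:cost_con1} and \eqref{eq:cost_con2}. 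Since $\text{TVD}_{1-\alpha}$ is the support function of a convex closed ambiguity set of probability measures, it is a coherent risk measure and therefore satisfies translational invariance; the deterministic portion can be pulled outside the risk measure.

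Next, I would apply the decomposition
\begin{equation*}
\text{TVD}_{1-\alpha}(C) = \alpha\,\sup_{c\in\Omega} c \;+\; (1-\alpha)\,\text{CVaR}_{1-\alpha}(C)
\end{equation*}
given in Section~II to the residual random cost $C := \sum_{k=0}^{N-1}(\boldsymbol{D}_k\boldsymbol{\bar{\delta}}_k + 2\boldsymbol{\tilde{x}}_k)^T Q\,\boldsymbol{D}_k\boldsymbol{\bar{\delta}}_k$. The supremum term is handled by introducing the epigraph variable $m$ and enforcing $m \geq C^j$ for every atom $j\in\{1,\dots,J^N\}$ of the joint pmf $p_N$, which is precisely \eqref{eq:cost_con1}. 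For the CVaR term, I would invoke the Rockafellar--Uryasev formula
\begin{equation*}
\text{CVaR}_{1-\alpha}(C) = \inf_{z \in \mathbb{R}}\Bigl\{\,z + \tfrac{1}{1-\alpha}\mathbb{E}\bigl[(C-z)^+\bigr]\Bigr\},
\end{equation*}
and replace the positive-part by nonnegative slacks $s(j)$ satisfying $s(j) \geq C^j - z$ and $s(j) \geq 0$. Using $\sum_{j} p_N(j)=1$, multiplying through by $(1-\alpha)$, and combining with the deterministic term yields exactly \eqref{eq:cost_new_val}.

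The only delicate step is justifying that translational invariance applies: one must observe that conditioned on the MPC decision variables $\boldsymbol{u}$ and the (given) initial state $\boldsymbol{x}_0$, the nominal trajectory $\boldsymbol{\tilde{x}}_k$ is deterministic, so $\sum_k c(\boldsymbol{\tilde{x}}_k,\boldsymbol{u}_k)$ is a constant (not a random variable) that can legitimately be extracted from $\text{TVD}_{1-\alpha}(\cdot)$. Everything else is routine epigraph reformulation, and since the resulting minimization is over newly introduced auxiliary variables that do not couple across problems, the two infima commute and the overall equality claimed in the lemma holds.
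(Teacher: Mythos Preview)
Your proposal is correct and follows essentially the same approach as the paper: split the quadratic stage cost into the nominal part $c(\boldsymbol{\tilde{x}}_k,\boldsymbol{u}_k)$ plus the disturbance-dependent residual, extract the deterministic part via translational invariance, apply the decomposition $\text{TVD}_{1-\alpha}=\alpha\sup+(1-\alpha)\text{CVaR}_{1-\alpha}$, and linearize the sup and CVaR with the epigraph variable $m$ and the Rockafellar--Uryasev slacks $(z,\boldsymbol{s})$, respectively. The only cosmetic difference is that the paper writes out $\boldsymbol{\tilde{x}}_k=A^k\boldsymbol{x}_0+\boldsymbol{B}_k\boldsymbol{\bar{u}}_k$ explicitly before applying translational invariance, whereas you invoke it more abstractly; the arguments are otherwise identical.
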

\begin{proof}
See Appendix~\ref{appendix: lem3proof}.
\end{proof}
Using the reformulations afforded by Lemmas~\ref{lem:risk_constraint},~\ref{lem:risk_tighten},~\ref{lem:cost}, we can reformulate the MPC optimization given in~\eqref{eq:mpc1}.
\begin{thm}
If there exists a solution to the following quadratic program,
\begin{subequations}\label{eq:mpc2}
\begin{align}
\begin{split}\label{eq:cost_new}
\min_{\boldsymbol{u},m, z, \boldsymbol{s}}  &\small{\sum_{k=0}^{N-1}c(\boldsymbol{\Tilde{x}}_k, \boldsymbol{u}_k) + \alpha m + (1-\alpha) \sum_{j=1}^{J^N}p_N(j) \bigg(z +\frac{s(j)}{1-\alpha} \bigg)}
\end{split}\\
\begin{split}\label{eq:dyn_new}
 \textrm{s.t.} \quad &\boldsymbol{\Tilde{x}}_{k+1} = A\boldsymbol{\Tilde{x}}_k + B\boldsymbol{u}_k,
\end{split}\\
\begin{split}\label{eq:stcon_new}
&\small{F_x\boldsymbol{\Tilde{x}}_k + \text{CVaR}_{\epsilon-\alpha}\big(F_x\boldsymbol{D}_{k}\boldsymbol{\bar{\delta}}_{k}\big) \leq g_x, }
\end{split}\\
\begin{split}\label{eq:ic_new}
&\boldsymbol{u}_k \in \mathcal{U}, \,\quad\boldsymbol{\Tilde{x}}_0 = \boldsymbol{x}(t),\,
\end{split}\\
\begin{split}
   \eqref{eq:cost_con1},~\eqref{eq:cost_con2},~\eqref{eq:cost_con3}, \, \forall k \in \{0, \dotsc N-1 \}.
\end{split}
\end{align}
\end{subequations}
then the solution is a feasible solution of~\eqref{eq:mpc1}.
\end{thm}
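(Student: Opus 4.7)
The plan is to unpack the definition of feasibility and verify, for any optimizer $(\boldsymbol{u}^\star, m^\star, z^\star, \boldsymbol{s}^\star)$ of the quadratic program~\eqref{eq:mpc2}, that the control sequence $\boldsymbol{u}^\star$ satisfies each constraint of the original problem~\eqref{eq:mpc1}. The three preparatory lemmas do essentially all of the work; my task is to align the indexing and make sure nothing stochastic is lost in translation.

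First, I would handle the dynamics, control constraints, and initial condition. From the batch expression recorded just before Lemma~\ref{lem:risk_constraint}, the random state admits the decomposition $\boldsymbol{x}_k = \boldsymbol{\Tilde{x}}_k + \boldsymbol{D}_k \boldsymbol{\bar{\delta}}_k$, where the nominal trajectory $\boldsymbol{\Tilde{x}}_k$ is driven by \eqref{eq:dyn_new} from $\boldsymbol{\Tilde{x}}_0 = \boldsymbol{x}(t)$. Therefore the deterministic dynamics and initial condition in \eqref{eq:mpc2} generate a random trajectory that satisfies the stochastic dynamics \eqref{eq:dyn1} with $\boldsymbol{x}_0 = \boldsymbol{x}(t)$, and the control constraint $\boldsymbol{u}_k \in \mathcal{U}$ transfers verbatim from \eqref{eq:ic_new} to \eqref{eq:ic}.

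Second, the distributionally robust state constraint \eqref{eq:stcon} follows directly from Lemma~\ref{lem:risk_tighten}: the tightened constraint \eqref{eq:stcon_new} is exactly \eqref{eq:cvar_stcon}, which the lemma shows is sufficient for \eqref{eq:stcon} to hold at every $k \in \{0,\dotsc,N-1\}$. Third, for the objective, Lemma~\ref{lem:cost} rewrites $\mathcal{J}(\boldsymbol{x}_t, \boldsymbol{u})$ as the augmented minimization~\eqref{eq:cost_new_val} over $(m,z,\boldsymbol{s})$ subject to \eqref{eq:cost_con1}--\eqref{eq:cost_con3}. Since \eqref{eq:cost_new} together with those same auxiliary constraints is literally that reformulation, the QP value at $(\boldsymbol{u}^\star, m^\star, z^\star, \boldsymbol{s}^\star)$ coincides with $\mathcal{J}(\boldsymbol{x}_t, \boldsymbol{u}^\star)$. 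Feasibility of $\boldsymbol{u}^\star$ in \eqref{eq:mpc1} is then immediate.

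The main subtlety I expect is keeping the random state $\boldsymbol{x}_k$ and its nominal counterpart $\boldsymbol{\Tilde{x}}_k$ cleanly separated when translating between the random-state constraint \eqref{eq:stcon} and the deterministic constraint \eqref{eq:stcon_new}: the substitution $\boldsymbol{x}_k = \boldsymbol{\Tilde{x}}_k + \boldsymbol{D}_k \boldsymbol{\bar{\delta}}_k$ is the bridge, and although it is already folded into Lemma~\ref{lem:risk_tighten}, making it explicit avoids any confusion about whether the constraint \eqref{eq:stcon_new} implicitly depends on the disturbance. Once that substitution is spelled out, the theorem reduces to an assembly of Lemmas~\ref{lem:risk_constraint},~\ref{lem:risk_tighten}, and~\ref{lem:cost} with matching indexing over $k$.
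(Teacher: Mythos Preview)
Your proposal is correct and follows essentially the same approach as the paper: invoke Lemmas~\ref{lem:risk_constraint} and~\ref{lem:risk_tighten} to show \eqref{eq:stcon_new} implies \eqref{eq:stcon}, use Lemma~\ref{lem:cost} for the cost equivalence, and observe that the resulting min--min collapses to the single-layer QP~\eqref{eq:mpc2}. The paper's proof is terser and does not spell out the dynamics/initial-condition/control-constraint correspondence you make explicit, but the logical structure is identical.
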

\begin{proof}
We showed in Lemmas~\ref{lem:risk_constraint},~\ref{lem:risk_tighten} that satisfying~\eqref{eq:stcon_new} also satisfies~\eqref{eq:stcon}. We further showed in Lemma~\ref{lem:cost} that the cost function can be reformulated as a minimization. Plugging this cost function into the original MPC gives us a min-min optimization problem that can be combined into a one-layer optimization given by~\eqref{eq:mpc2}. This is true because the feasible solution to the one-layer optimization~\eqref{eq:mpc2} must be a feasible solution for the min-min problem and vice versa. Hence the optimal value of both optimizations must be equal.
\end{proof}
\section{Numerical Experiments}\label{sec:example}
{We compare our method, \textbf{DRMPC} given in~\eqref{eq:mpc2} and \textbf{tight DRMPC (TDRMPC)} that uses constraint~\eqref{eq:tighten_stcon} in place of~\eqref{eq:stcon_new}, against chance constrained stochastic MPC (SMPC) methods that evaluate the chance constraint using mixed integer variables as seen in~\cite{blackmore2010particleSMPC, luedtke2010integer} and a CVaR MPC (CMPC) approach inspired by~\cite{hakobyan2019risk,singh2018framework,Sopasakis2019}. In~\cite{hakobyan2019risk}, the authors consider a CVaR-constrained MPC whereas in~\cite{singh2018framework} the authors considered a CVaR cost. Similar to~\cite{Sopasakis2019}, although we don't consider dynamic risk, we consider CVaR cost and constraints in the MPC problem for the most consistent comparison to our method. The MPC optimizations considered for both these approaches are given below.
\vskip -0.2 true in
\begin{subequations}\label{eq:comp}
\begin{center}
\begin{align*}
\tag{\textbf{SMPC}}
\begin{split}
\min_{\boldsymbol{u}} \quad &\mathbb{E}\bigg(\sum_{k=0}^{N-1}c(\boldsymbol{x}_k, {u}_k)\bigg) \quad
\end{split}\\
\begin{split}
 \textrm{s.t.} \quad &\boldsymbol{x}_{k+1} = A\boldsymbol{x}_k + B\boldsymbol{u}_k + D\boldsymbol{{\delta}}_k,
\end{split}\\
\begin{split}
&\mathbb{P}(F_x\boldsymbol{x}_k - g_x > 0)\leq \epsilon, 
\end{split}\\
\begin{split}
&\boldsymbol{x}_0 = \boldsymbol{x}(t).
\end{split}\\\tag{\textbf{CMPC}}
\begin{split}
\min_{\boldsymbol{u}} \quad &\text{CVaR}_{1-\alpha}\bigg(\sum_{k=0}^{N-1}c(\boldsymbol{x}_k, \boldsymbol{u}_k)\bigg) \quad
\end{split}\\
\begin{split}
 \textrm{s.t.} \quad &\boldsymbol{x}_{k+1} = A\boldsymbol{x}_k + B\boldsymbol{u}_k + D\boldsymbol{{\delta}}_k,
\end{split}\\
\begin{split}
&\text{CVaR}_{\epsilon}(F_x\boldsymbol{x}_k - g_x)\leq 0, 
\end{split}\\
\begin{split}
&\boldsymbol{x}_0 = \boldsymbol{x}(t).
\end{split}
\end{align*}
\end{center}
\end{subequations}}

To illustrate the effectiveness and the advantages of the proposed method, we compare it to chance constrained stochastic MPC and CVaR MPC. We perturb the probability mass function of the disturbance to demonstrate the proposed method's distributional robustness.
We look at a simple two-dimensional discrete system $ \boldsymbol{x}_{k+1} = A\boldsymbol{x}_k + Bu_k + D\delta_k$, with
\begin{equation*}{\small
   A = \begin{bmatrix} 1.0475 & -0.0463 \\ 0.0463 & 0.9690\end{bmatrix}, \, B = D = \begin{bmatrix} 0.028 \\ -0.0195\end{bmatrix}.}
\end{equation*}
The control constraints are $-20 \leq u_k \leq 20,$ the state constraints are $-\begin{bmatrix} 4 & 4\end{bmatrix}^T \leq \boldsymbol{x}_k \leq \begin{bmatrix} 4 & 4\end{bmatrix}^T,$ and the disturbance lies in the set $\delta_k \in \{-1, \, 0,\, 1\}$ with probabilities $\boldsymbol{p} = \begin{bmatrix}0.1 & 0.8 & 0.1\end{bmatrix}$ respectively. We run 100 random simulations for each value of $\epsilon \in \{0.09,\,0.2,\,0.5,\,0.9\}$ such that each simulation has $35$ runs of the MPC optimization. The initial system state, $\boldsymbol{x}_0$, lies somewhere between $(3.1, 3.0)^T$ and $(4.1, 4.0)^T$. For each Monte-Carlo simulation, we randomly choose an initial condition in this range. 

\textbf{Discussion: }{The results are summarized in Table~\ref{table:1}. We have two comparisons for each value of $\epsilon$: the nominal case wherein we do not allow any perturbations to the original distribution $\boldsymbol{p}$ of the disturbances ($\alpha = 0$), and another allowing random variations in the distribution $\boldsymbol{p}$ with a total variation distance $\alpha$. As seen in Table~\ref{table:1}, for the 100 simulations, the percentage of violation of the constraints is consistently lower for the distributionally robust formulation. When $\alpha=0$, the chance constraint and cost for CMPC and DRMPC are equivalent and the results are the same for both. However, as soon as we allow for $\alpha$ perturbations in the distribution of the process noise, we see that DRMPC allows much fewer constraint violations and has a consistently lower cost. The TDRMPC is even more risk-averse than DRMPC due to further constraint tightening and we see the smallest percentage of constraint violations with behaviors that have consistently lower cost than SMPC and CMPC. }

 Fig.~\ref{fig:test} depicts one such Monte-Carlo simulation. The constraint $y_k \leq 4$, where $y$ is the second component of our state $\boldsymbol{x}_k$, is violated by both the SMPC and CVaR MPC controllers while the DRMPC and tight DRMPC controllers do not violate constraints. However, this risk-averse behavior comes at the cost of slower convergence to the origin.

The average times (in seconds) for each MPC iteration are
{\begin{align*}
\text{SMPC: } 0.32, \text{CMPC: } 0.47, \text{DRMPC: } 0.54 , \text{TDRMPC: } 0.11
\end{align*}}
\vskip -0.1 true in
run using YALMIP~\cite{yalmip} and a Gurobi solver~\cite{gurobi} in MATLAB (on a 2.7 GHz Quad-Core Intel Core i7 processor). { Thus, TDRMPC provides extra safety and reduced computational effort compared to SMPC and CMPC (with essentially no cost penalty in this example).  These results motivate and justify our risk-based chance constraint formulation with the novel constraint tightening approximations}. 


\begin{table}[t]
\vskip 0.1 true in
  \centering
  \resizebox{\columnwidth}{!}{%
  \begin{tabu}{|c|c|[1pt]c|c|[1pt]c|c|[1pt]c|c|[1pt]c|c|}
    \hline
    & $\boldsymbol{\epsilon}$ & \multicolumn{2}{|c|[1pt]}{\textbf{0.09}}&\multicolumn{2}{|c|[1pt]}{\textbf{0.2}} & \multicolumn{2}{|c|[1pt]}{\textbf{0.5}} & \multicolumn{2}{|c|}{\textbf{0.9}} \\[0.7ex]
    \hline
    & {$\mathit{\alpha}$}  & \textit{0} & \textit{0.05} & \textit{0} & \textit{0.15} & \textit{0} & \textit{0.4 } & \textit{0} & \textit{0.8}\\ [0.7ex]
    \tabucline[1pt]{1-10}
  \parbox[t]{2mm}{\multirow{4}{*}{\rotatebox[origin=c]{90}{Violations}}} &SMPC & 3.08& 3.17& 11.9& 12.2 & 14.6 & 15.9 & 23.4 & 24.1\\ [0.7ex]
   \tabucline[0.25pt]{2-10}
  & CMPC & 0 & 0 & 2.71 & 3.03  & 2.97 & 4.69 & 2.97  & 9.28\\ [0.7ex]
  \tabucline[0.25pt]{2-10}
  & DRMPC &0 &0 & 2.71 & 0 &2.97  & 0 & 2.97 & 0\\ [0.7ex]
  \tabucline[0.25pt]{2-10}
  & TDRMPC & 0 & 0  &  0.69 & 0  & 2.7 & 0 &  2.89& 0\\ [0.7ex]
   \tabucline[1pt]{1-10}
   \parbox[t]{2mm}{\multirow{4}{*}{\rotatebox[origin=c]{90}{{Cost}}}} & SMPC & 1.12& 1.13& 1.19& 1.19& 1.22& 1.21 &1.31 & 1.31\\[0.7ex]
   \tabucline[0.25pt]{2-10}
   & CMPC & 1.1 & 1.11  &  1.13& 1.12&1.19  & 1.18  & 1.21& 1.23 \\[0.7ex]
   \tabucline[0.25pt]{2-10}
   & DRMPC & 1.1& 1.1 & 1.13 & 1.01& 1.19 & 1.1 & 1.21 & 1.1\\ [0.7ex]
   \tabucline[0.25pt]{2-10}
    & TDRMPC & 1.01 & 1.02  &  1.01 & 1.02  & 1.12 & 1.02 &  1.18& 1.01 \\[0.7ex]
   \hline 
  \end{tabu}%
  }
  \caption{{Summary of results from Monte-Carlo simulations. The percentage of constraint violations and the average cost of each simulation $(\times 10^4)$ are compared.}}
  \vskip -0.2 true in
  \label{table:1}
\end{table}

\begin{figure}[tbp]
\centering
\vskip -0.15 true in
    \includegraphics[width=80mm,scale=0.5]{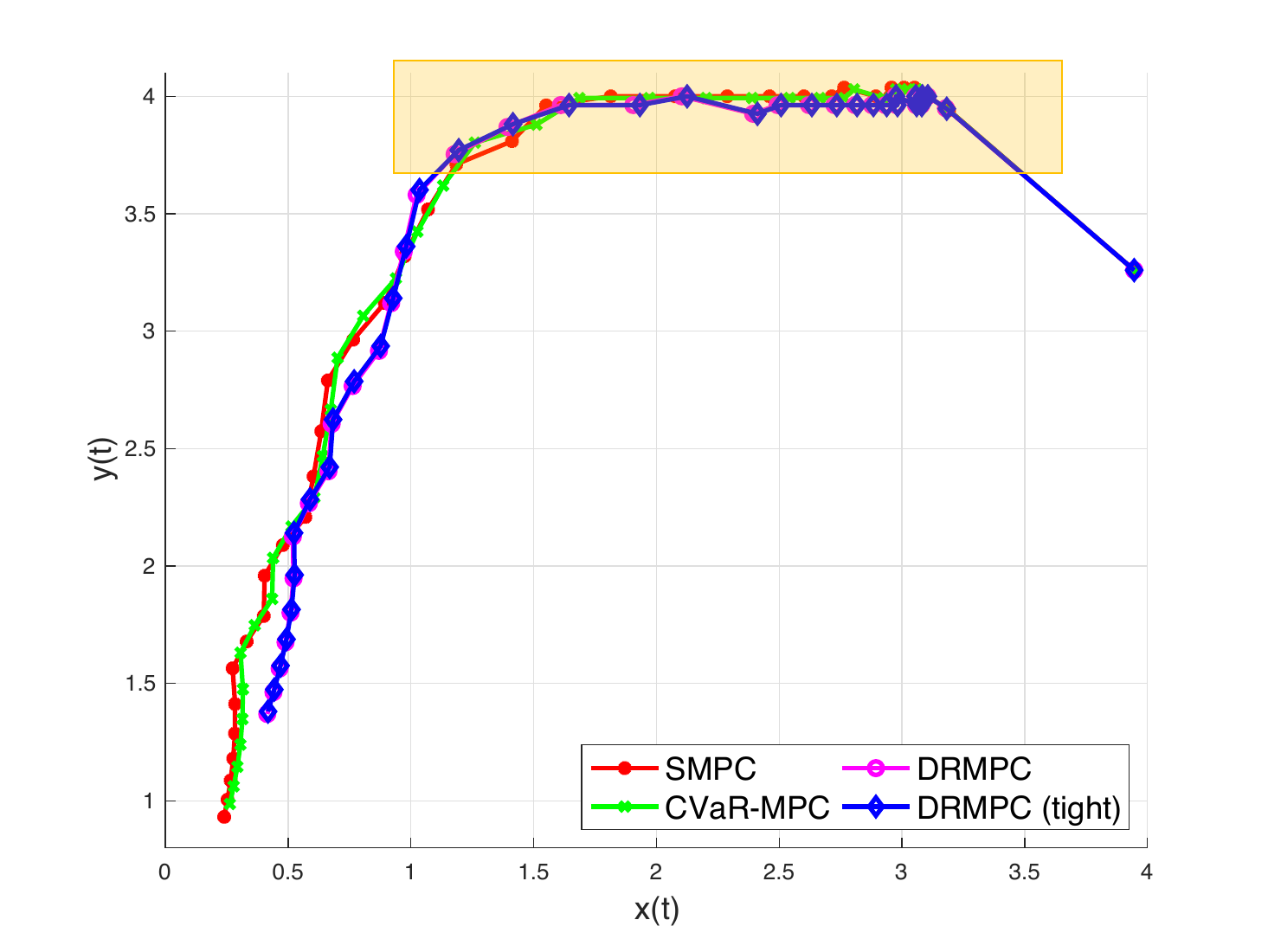}\\
     \vskip -0.15 true in
 \includegraphics[width=80mm,scale=0.5]{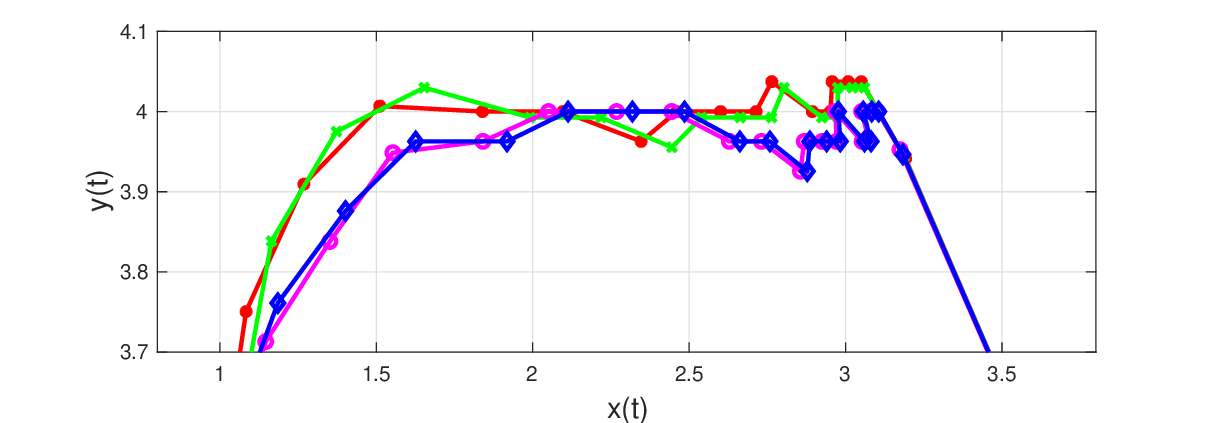}
\caption{\textbf{Top:} Comparison of the four controllers through visualization of one of the 100 simulations ($\epsilon = 0.5, \alpha = 0.4$). \textbf{Bottom:} Yellow region of the top figure zoomed in for clarity on the behavior near the boundary of the state constraint set, $y_k \le 4$.}
\vskip -0.2 true in
\label{fig:test}
\end{figure}
 
\section{Conclusions} \label{sec:conclusions}
This paper considered the problem of DRCC MPC with TVD-based ambiguity sets. We provided a CVaR tightening of the DRCC that can be computed offline, thereby reducing real-time computation. We used this DRCC in conjunction with a TVD-risk cost to get a distributionally robust MPC formulated as a quadratic program. Our simulation results showed the efficacy of our method in terms of both computational efficiency and safety in comparison to other methods. Future work involves extending this work to general coherent risk measures with f-divergence ambiguity sets. 
\vskip -0.5 true in

\appendix
\vskip -0.2 true in
\subsection{Batch matrices}\label{appendix: batch}
The state update matrices for batch form are given by,
\begin{align*}
    \boldsymbol{B}_k &= \begin{bmatrix}A^{k-1}B & A^{k-2}B &\dotsc & B\end{bmatrix} \in \mathbb{R}^{n_x\times kn_u}\\
    \boldsymbol{\bar{u}}_k &= \begin{bmatrix}\boldsymbol{u}_1^T &\boldsymbol{u}_2^T & \dotsc & \boldsymbol{u}_{k}^T\end{bmatrix}^T\in \mathbb{R}^{kn_u\times 1}\\
    \boldsymbol{D}_k &= \begin{bmatrix}A^{k-1}D & A^{k-2}D &\dotsc & D\end{bmatrix} \in \mathbb{R}^{n_x\times kn_d}\\
    \boldsymbol{\bar{\delta}}_k &= \begin{bmatrix}\boldsymbol{\delta}_1^T &\boldsymbol{\delta}_2^T & \dotsc & \boldsymbol{\delta}_{k}^T\end{bmatrix}^T\in \mathbb{R}^{kn_d\times 1}.
\end{align*}
\vskip -0.2 true in

\subsection{Proof of Lemma 3}\label{appendix: lem3proof}
Consider~\eqref{eq:cost} with a quadratic stage cost $c(\boldsymbol{x}_k, \boldsymbol{u}_k)$. 
\begin{align*}
    &\mathcal{J}(\boldsymbol{x}_0, \boldsymbol{u}) = \text{TVD}_{1-\alpha}\bigg(\!\sum_{k=0}^{N-1}c(\boldsymbol{x}_k, \boldsymbol{u}_k)\bigg) \\
    &= \small{\alpha\!\sup_{(\boldsymbol{\delta_1},\, \dotsc, \, \boldsymbol{\delta_k})}\sum_{k=0}^{N-1}\!\!c(\boldsymbol{x}_k, \boldsymbol{u}_k) + (1-\alpha)\text{CVaR}_{1-\alpha}\bigg(\!\sum_{k=0}^{N-1}c(\boldsymbol{x}_k, \boldsymbol{u}_k)\bigg)}
\end{align*}
Our goal is to find the worst-case value and the CVaR of the total stage cost.
Consider the nominal state update equation,
\begin{align}
    \boldsymbol{\Tilde{x}}_{k+1} = A^{k+1}\boldsymbol{x}_0 + \boldsymbol{B}_{k+1}\boldsymbol{\bar{u}}_{k+1}
\end{align}
We can write the quadratic stage cost as a function of the nominal state (without any disturbance effects) as,
\begin{align*}
    &c(\boldsymbol{x}_k, \boldsymbol{u}_k) = \boldsymbol{x}_k^TQ\boldsymbol{x}_k + \boldsymbol{u}_k^TR\boldsymbol{u}_k \\ 
    &= \small{(A^{k}\boldsymbol{x}_0 + \boldsymbol{B}_{k}\boldsymbol{\bar{u}}_{k} + \boldsymbol{D}_{k}\boldsymbol{\bar{\delta}}_{k})^TQ(A^{k}\boldsymbol{x}_0 + \boldsymbol{B}_{k}\boldsymbol{\bar{u}}_{k} + \boldsymbol{D}_{k}\boldsymbol{\bar{\delta}}_{k}) +\boldsymbol{u}_k^TR\boldsymbol{u}_k}\\
    &=(A^{k}\boldsymbol{x}_0 + \boldsymbol{B}_{k}\boldsymbol{\bar{u}}_{k})^TQ(A^{k}\boldsymbol{x}_0 + \boldsymbol{B}_{k}\boldsymbol{\bar{u}}_{k}) + 2(A^{k}\boldsymbol{x}_0)^TQ\boldsymbol{D}_{k}\boldsymbol{\bar{\delta}}_{k} \\& \qquad + 2(\boldsymbol{B}_{k}\boldsymbol{\bar{u}}_{k})^TQ\boldsymbol{D}_{k}\boldsymbol{\bar{\delta}}_{k} +(\boldsymbol{D}_{k}\boldsymbol{\bar{\delta}}_{k})^TQ\boldsymbol{D}_{k}\boldsymbol{\bar{\delta}}_{k} +\boldsymbol{u}_k^TR\boldsymbol{u}_k\\
    &=\boldsymbol{\Tilde{x}}_{k}^TQ\boldsymbol{\Tilde{x}}_{k} + \boldsymbol{u}_k^TR\boldsymbol{u}_k + (\boldsymbol{D}_{k}\boldsymbol{\bar{\delta}}_{k})^TQ\boldsymbol{D}_{k}\boldsymbol{\bar{\delta}}_{k} +\\ & \qquad 2(A^{k}\boldsymbol{x}_0)^TQ\boldsymbol{D}_{k}\boldsymbol{\bar{\delta}}_{k} +  2(\boldsymbol{B}_{k}\boldsymbol{\bar{u}}_{k})^TQ\boldsymbol{D}_{k}\boldsymbol{\bar{\delta}}_{k} \\
    &= c(\boldsymbol{\Tilde{x}}_k, \boldsymbol{u}_k) +\big(\boldsymbol{D}_{k}\boldsymbol{\bar{\delta}}_{k} + 2A^{k}\boldsymbol{x}_0 +  2\boldsymbol{B}_{k}\boldsymbol{\bar{u}}_{k}\big)^TQ\boldsymbol{D}_{k}\boldsymbol{\bar{\delta}}_{k}.
\end{align*}
We know from the translational invariance property of coherent risk measures that,
\begin{align*}
    \text{TVD}&_{1-\alpha}{\bigg(\!\sum_{k=0}^{N-1}c(\boldsymbol{x}_k, \boldsymbol{u}_k)\bigg)} =\!\! \sum_{k=0}^{N-1}c(\boldsymbol{\Tilde{x}}_k, \boldsymbol{u}_k) + \\ & \text{TVD}_{1-\alpha}\bigg(\!\sum_{k=0}^{N-1}\!\!\big(\boldsymbol{D}_{k}\boldsymbol{\bar{\delta}}_{k} + 2A^{k}\boldsymbol{x}_0 +  2\boldsymbol{B}_{k}\boldsymbol{\bar{u}}_{k}\big)^TQ\boldsymbol{D}_{k}\boldsymbol{\bar{\delta}}_{k}\bigg).
\end{align*}
Now, the above TVD is expressed as a combination of the worst-case value and the CVaR. The worst-case value is, 
\begin{align*}
    &\sup_{(\boldsymbol{\delta_1},\, \dotsc, \, \boldsymbol{\delta_k})}\sum_{k=0}^{N-1}\big(\boldsymbol{D}_{k}\boldsymbol{\bar{\delta}}_{k} + 2A^{k}\boldsymbol{x}_0 +  2\boldsymbol{B}_{k}\boldsymbol{\bar{u}}_{k}\big)^TQ\boldsymbol{D}_{k}\boldsymbol{\bar{\delta}}_{k} \\ 
    &= \min_{m\in\mathbb{R}} \quad m \\
    &\qquad \textrm{s.t.} \quad m \geq \sum_{k=0}^{N-1}\big(\boldsymbol{D}_{k}\boldsymbol{\bar{\delta}}_{k}^j + 2A^{k}\boldsymbol{x}_0 +  2\boldsymbol{B}_{k}\boldsymbol{\bar{u}}_{k}\big)^TQ\boldsymbol{D}_{k}\boldsymbol{\bar{\delta}}_{k}^j \\ & \qquad \qquad \qquad \qquad \forall j \in \{1, \dotsc, J^N \} .
\end{align*}
Similarly, the CVaR can be computed through a minimization as shown in Section~\ref{sec:preliminaries}. Hence the cost of the MPC can be rewritten as the minimization given by~\eqref{eq:cost_new_val}.
\vskip -0.2 true in
\footnotesize{
\bibliography{references}
}
\bibliographystyle{ieeetr}

\normalsize
\end{document}